\documentclass[pra,aps,reprint,superscriptaddress,longbibliography,floatfix]{revtex4-1}
\usepackage{times}
\usepackage{epsfig}
\usepackage{amsfonts}
\usepackage{amsmath}
\usepackage{amssymb}
\usepackage{amsthm}
\usepackage{color}
\usepackage{multirow}
\usepackage[normalem]{ulem}
\newcommand{\stkout}[1]{\ifmmode\text{\sout{\ensuremath{#1}}}\else\sout{#1}\fi}
\usepackage{latexsym}
\usepackage{mathrsfs}
\usepackage{natbib}
\usepackage{verbatim}
\usepackage[T1]{fontenc}

\DeclareMathOperator{\Tr}{tr}

\newcommand{\ket}[1]{|#1\rangle}
\newcommand{\bra}[1]{\langle#1|}
\newcommand{\braket}[2]{\langle#1|#2\rangle}
\newcommand{\bracket}[3]{\langle#1|#2|#3\rangle}
\newcommand{\ketbra}[2]{|#1\rangle\!\langle#2|}

\newcommand{\id}{\openone}
\newcommand{\mH}{\mathcal{H}}
\newcommand{\mE}{\mathcal{E}}
\newcommand{\mS}{\mathcal{S}}
\newcommand{\tr}[1]{\Tr\left[#1\right]}
\newcommand{\trr}[2]{\Tr_{#1}\left[#2\right]}

\newtheorem{theorem}{Result}
\newtheorem{conjecture}{Conjecture}

\newtheorem{lemma}[theorem]{Lemma}

\usepackage[colorlinks=true,linkcolor=blue,citecolor=magenta,urlcolor=blue]{hyperref}

%%%%%%%%%%%%%%%%%%%%%%%%%%%%%%%%%%%%%%%%%%%%%%%%%%%%%%%%%%%%%%%%%%%

\begin{document}

%%%%%%%%%%%%%%%%%%%%%%%%%%%%%%%%%%%%%%%%%%%%%%%%%%%%%%%%%%%%%%%%%%%

\title{Semi-device-independent characterisation of multipartite entangled states and measurements}

%%%%%%%%%%%%%%%%%%%%%%%%%%%%%%%%%%%%%%%%%%%%%%%%%%%%%%%%%%%%%%%%%%%

\author{Armin Tavakoli}
\affiliation{D\'epartement de Physique Appliqu\'ee, Universit\'e de Gen\`eve, CH-1211 Gen\`eve, Switzerland}

\author{Alastair A.\ Abbott}
\affiliation{Univ.\ Grenoble Alpes, CNRS, Grenoble INP, Institut N\'eel, 38000 Grenoble, France}

\author{Marc-Olivier Renou}
\affiliation{D\'epartement de Physique Appliqu\'ee, Universit\'e de Gen\`eve, CH-1211 Gen\`eve, Switzerland}

\author{Nicolas Gisin}
\affiliation{D\'epartement de Physique Appliqu\'ee, Universit\'e de Gen\`eve, CH-1211 Gen\`eve, Switzerland}

\author{Nicolas Brunner}
\affiliation{D\'epartement de Physique Appliqu\'ee, Universit\'e de Gen\`eve, CH-1211 Gen\`eve, Switzerland}

%%%%%%%%%%%%%%%%%%%%%%%%%%%%%%%%%%%%%%%%%%%%%%%%%%%%%%%%%%%%%%%%%%%

\begin{abstract}

The semi-device-independent framework allows one to draw conclusions about properties of an unknown quantum system under weak assumptions. 
Here we present a semi-device-independent scheme for the characterisation of multipartite entanglement based around a game played by several isolated parties whose devices are uncharacterised beyond an assumption about the dimension of their Hilbert spaces.
Our scheme can certify that an $n$-partite high-dimensional quantum state features genuine multipartite entanglement. Moreover, the scheme can certify that a joint measurement on $n$ subsystems is entangled, and provides a lower bound on the number of entangled measurement operators. 
These tests are strongly robust to noise, and even optimal for certain classes of states and measurements, as we demonstrate with illustrative examples. Notably, our scheme allows for the certification of many entangled states admitting a local model, which therefore cannot violate any Bell inequality.
\end{abstract}

%%%%%%%%%%%%%%%%%%%%%%%%%%%%%%%%%%%%%%%%%%%%%%%%%%%%%%%%%%%%%%%%%%%

\maketitle

\emph{Introduction.---}%
Entanglement represents a central feature of quantum theory and a key resource for quantum information processing~\cite{H09}. Therefore, the task of characterising entanglement experimentally is of fundamental significance. In particular, the development of quantum networks, multiparty cryptography, quantum metrology, and quantum computing necessitate certification methods tailored to multipartite entangled states, as well as to entangled joint measurements. 

Standard methods for the certification of multipartite entanglement rely on entanglement witnesses~\cite{GT09}, as quantum tomography quickly becomes infeasible as the number of subsystems increases. 
Entanglement witnesses can also be used for the particularly important sub-case of certifying genuine multipartite entanglement (GME), the strongest form of multipartite entanglement, where all subsystems are genuinely entangled together; see, e.g., Refs.~\cite{H09,GT09,ES14}. 
In practice, the main drawback of entanglement witnesses is that they crucially rely on the correct calibration of the measurement devices, as a set of specific observables must be measured. 
Importantly, even small alignment errors can have undesirable consequences, e.g.\ leading to false positives~\cite{RFD12}, and it is generally cumbersome to estimate these errors and take them into account rigorously.

This motivates the developments of certification methods that require minimal assumptions on the measurement devices, and in particular do not rely on their detailed characterisation. 
This is the spirit of the device-independent (DI) approach to entanglement characterisation, which also leads to interesting possibilities for quantum information processing~\cite{Acin07,Colbeck,Pironio10}. 
The main idea consists in using Bell inequalities, given that a violation of such an inequality necessarily implies the presence of entanglement in the state (even without any knowledge about the measuring devices).
Moreover, GME can also be detected via Bell-like inequalities~\cite{Collins,Seevinck,BG11, PV11, BB12, MB13, MR16}. 
Experimentally, however, this approach is very demanding as high visibilities are typically required. 
More generally, a broad range of entangled states (including many GME states~\cite{Toth,Remik,Bowles16}) cannot, in fact, violate any Bell inequality~\footnote{Note that nonlocality can nevertheless be activated in some more sophisticated Bell scenarios involving, e.g., sequential measurements~\cite{Popescu} or processing of multiple copies~\cite{Palazuelos}. Furthermore, with the help of additional sources of perfect singlets, any entangled bipartite state can be certified~\cite{BS18}.} as they admit local hidden variable models~\cite{W89,Augusiak_review}.
Finally, although Bell inequalities can in principle be used for the certification of entangled joint measurements~\cite{RH11}, no practical scheme has been reported thus far.

This motivates the exploration of partially DI scenarios, in between the fully DI case of Bell inequalities and the device-dependent case of entanglement witnesses. Here, only weak assumptions about the devices are typically made. One possibility is to consider that a subset of parties perform well-characterised measurements, while the others are uncharacterised~\cite{Reid,Paul,Diamanti}. 
Another option is to consider Bell experiments with quantum inputs, leading to the so-called measurement device-independent characterisation of entanglement~\cite{B12,Branciard}. 
While experimental demonstrations have been reported, both of these approaches have the drawback of requiring certain parts of the experiment to be fully characterised.

In the present work, we follow a different approach for entanglement characterisation. 
Specifically, we will assume only an upper bound on the Hilbert space dimension of the subsystems of interest, but require no detailed characterisation of any of the devices. 
Roughly speaking, this assumption means that all the relevant degrees of freedom are described in a Hilbert space of given dimension~\cite{PB11, LP12, TH15}, and that other potential side-channels can be neglected. 
This scenario, usually referred to as the semi-DI (SDI) setting, has been considered for the characterisation of entanglement in the simplest setting of two-qubit states~\cite{LV11,Koon}, as well as the detection of two-qubit entangled measurements~\cite{VN11,BV14}.

Here, we present a versatile scheme for characterising both multipartite entangled states and entangled measurements in a semi-DI setting. 
Our scheme allows one to simultaneously certify that (I) an $n$-partite quantum state (of arbitrary local dimension) is GME, and that (II) a measurement performed on the $n$ subsystems is entangled. 
Furthermore, we obtain a finer characterisation for the measurement, namely a lower bound on the number of entangled measurement operators. 
In general our scheme is strongly robust to noise, and even optimal in certain cases. 
It certifies all noisy qubit Greenberger-Horne-Zeilinger (GHZ) states that are GME, and, in the bipartite case, all entangled isotropic states of arbitrary dimension. 
Other classes of GME states, e.g.\ Dicke states, can also be certified, although not optimally. 
For the case of entangled measurements, we give two illustrative examples. 
In particular, we optimally certify the presence of entanglement in a noisy Bell-state measurement. 
Finally, we conclude with a list of open questions.

\emph{Scenario.---}%
The scenario we consider consists of a state being initially prepared, then transformed by several parties, and finally measured. 
Since we operate within the SDI framework no assumptions are made on the internal workings of any of these parties' devices, other than a bound on their local Hilbert space dimensions, see Fig.~\ref{fig}. 
Throughout the experiment the parties cannot communicate amongst themselves; one may consider, e.g., that they are spacelike separated as is usual in the DI framework.

Let an uncharacterised source distribute a state $\rho$ of arbitrary dimension between $n$ parties $A_1,\ldots,A_n$, each of which receives a subsystem. 
Each party $A_k$, for $k=1,\ldots, n$, receives uniformly random inputs $x_k,y_k\in\{0,\ldots,d-1\}$. 
Subsequently, they perform local transformations $\mathcal{T}^{(k)}_{x_k y_k}$ (which may be any completely positive trace-preserving (CPTP) map) which map their local states into a $d$-dimensional state. The transformed state is sent to a final party denoted by $B$ who performs a (possibly joint) measurement $\{M_\mathbf{b}\}_\mathbf{b}$ (i.e., a positive-operator valued measure (POVM) with $M_\mathbf{b}\ge 0$ and $\sum_\mathbf{b} M_\mathbf{b} = \id$) which produces an outcome string $\mathbf{b}=b_1\ldots b_n\in\{0,\ldots,d-1\}^{n}$ (see Fig.~\ref{fig}). 
\begin{figure}
	\centering
	\includegraphics[width=0.9\columnwidth]{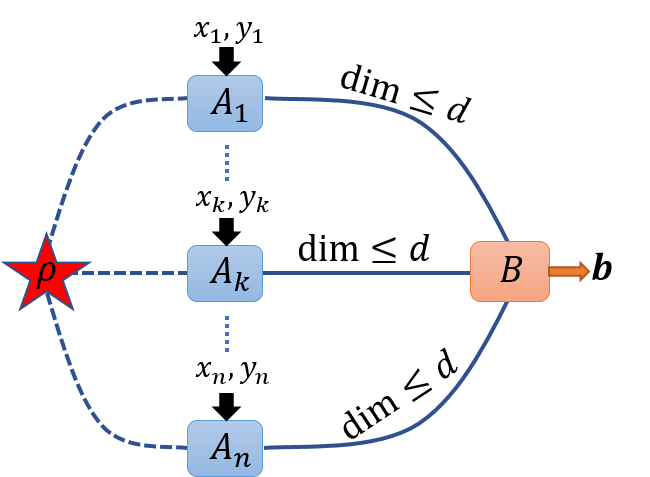}
	\caption{An $n$-partite state is distributed between parties $A_1,\ldots, A_n$ who perform transformations on their local systems depending on their random inputs $(x_k,y_k)$. Each party sends their transformed $d$-dimensional system to $B$ who performs a measurement and obtains an outcome $\mathbf{b}$. 
		}\label{fig}
\end{figure}
The experiment gives rise to a probability distribution $P(\mathbf{b}|\mathbf{x},\mathbf{y})$, where $\mathbf{x}=x_1\ldots x_n$ and $\mathbf{y}=y_1\ldots y_n$, given by 
\begin{equation}
P(\mathbf{b}|\mathbf{x},\mathbf{y})= \Tr\left[\left(\bigotimes_{k=1}^{n} \mathcal{T}^{(k)}_{x_k y_k}\right)\!\!\left[\rho\right] \cdot M_{\mathbf{b}}\right].
\end{equation}
The goal of the task (or game) we consider is for the parties to cooperate so that $B$'s output satisfies the conditions
\begin{equation}\label{wincond}
\hspace{2mm} b_1=\sum_{i=1}^{n}x_i\equiv C_1(x) \hspace{2mm} \text{and}  \hspace{2mm} b_k= y_k-y_1\equiv C_k(y),
\end{equation} 
for $k=2,\ldots,n$, where all quantities are computed modulo $d$.
Compactly, we write $C(\mathbf{x},\mathbf{y})$ for the (unique) string $\mathbf{b}$ satisfying all the above conditions. Note that these conditions are not totally symmetric (except when $n=2$), but they will nonetheless prove useful in certifying entanglement. 
Given a strategy leading to a probability distribution $P(\mathbf{b}|\mathbf{x},\mathbf{y})$, the probability of winning the task (or if a win is rewarded with a point, the average score) is thus given by
\begin{equation}\label{test}
\mathcal{A}_{n,d}=\frac{1}{d^{2n}}\sum_{\mathbf{x},\mathbf{y}} P\left(\mathbf{b}=C(\mathbf{x},\mathbf{y})\,|\, \mathbf{x},\mathbf{y}\right).
\end{equation}
We now show how, from the value of an observed average score $\mathcal{A}_{n,d}$, one can make inferences about the entanglement of the state $\rho$ and the measurement $\{M_\mathbf{b}\}_\mathbf{b}$.

\emph{Characterising entangled states.---}%
We first consider certifying the GME of the shared state. 
A state is said to be GME if it is not biseparable, i.e., if it cannot be written in the form $\rho=\sum_{S}\sum_{i} p_{S,i}\rho_i^{S}\otimes \rho_i^{\bar{S}}$, for any possible bipartition $\{S,\bar{S}\}$ of the subsystems $\{1,\ldots,n\}$, where $\sum_{S,i}p_{S,i}=1$ and $p_{S,i}\geq 0$. 

We now show that the value of $\mathcal{A}_{n,d}$ can be nontrivially upper bounded for any $n$-partite biseparable state. This will allow us to certify GME since, as we will see later, the bound is violated by many GME states of interest.
\begin{theorem}\label{res1}
	Let $\rho$ be a state of $n$ subsystems. For any measurement $\{M_\mathbf{b}\}_\mathbf{b}$ and any transformations $\{\mathcal{T}^{(k)}_{x_ky_k}\}_k$, it holds that
	\begin{equation}\label{state}
	\rho \emph{ is biseparable} \implies  \mathcal{A}_{n,d}\leq 1/d.
	\end{equation}
	Hence, whenever $\mathcal{A}_{n,d}>1/d$, $\rho$ is certified to be GME.	Moreover, this inequality is tight and the bound can be saturated with fully separable states. 
\end{theorem}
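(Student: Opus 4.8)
The plan is to bound $\mathcal{A}_{n,d}$ separately over each bipartition $\{S,\bar S\}$ and take the worst case. Since any biseparable $\rho$ is a convex mixture of states each separable across some fixed bipartition, and $\mathcal{A}_{n,d}$ is linear in $\rho$, it suffices to prove $\mathcal{A}_{n,d}\le 1/d$ for a state $\rho=\rho^S\otimes\rho^{\bar S}$ that factorises across a single bipartition (and then average). Fix such a bipartition with, say, party $A_1$ lying in one of the two parts. The key observation should be that the winning conditions~\eqref{wincond} can be ``decoded'' by exactly one of the two sides: the first condition $b_1=\sum_i x_i$ couples all the $x_i$'s together, while the conditions $b_k=y_k-y_1$ for $k\ge 2$ all involve $y_1$. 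So whichever side does \emph{not} contain $A_1$ cannot, on its own, compute the target output for its own register, because the relevant input $x_1$ (or $y_1$) is held on the other side and cannot be communicated.

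Concretely, I would argue as follows. Write $\mathbf{b}=C(\mathbf{x},\mathbf{y})$ and split $\mathbf{b}=(\mathbf{b}_S,\mathbf{b}_{\bar S})$. Because $\rho$ factorises and the parties cannot communicate, the induced distribution factorises: $P(\mathbf{b}\,|\,\mathbf{x},\mathbf{y}) = \sum_\lambda P_S(\mathbf{b}_S\,|\,\mathbf{x}_S,\mathbf{y}_S,\lambda)\,P_{\bar S}(\mathbf{b}_{\bar S}\,|\,\mathbf{x}_{\bar S},\mathbf{y}_{\bar S},\lambda)$ for some shared variable $\lambda$ carried by $B$'s measurement (one has to phrase this carefully since $B$ holds a joint $d^n$-dimensional system and the $M_\mathbf{b}$ need not be product — but linearity in $\rho=\rho^S\otimes\rho^{\bar S}$ lets one absorb $B$'s side into local response functions via the Choi/teleportation-type argument, with the randomness of $B$'s apparatus playing the role of $\lambda$). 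Then $\mathcal{A}_{n,d}$ factorises into a product of a ``group-reconstruction'' game for each side. For the side without $A_1$: the target string for that side depends on an input held only by the other side, uniformly at random over $\{0,\dots,d-1\}$, and hence the best any strategy can do is guess it, giving success probability exactly $1/d$ for that side, while the other side can at best achieve $1$. Averaging the product over the $n$ bipartitions then gives $\mathcal{A}_{n,d}\le 1/d$.

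For tightness and saturation by a fully separable state: exhibit an explicit deterministic classical strategy. Take $\rho$ trivial (any product state), let each $A_k$ simply forward a classical label encoding $(x_k,y_k)$ — wait, that exceeds dimension $d$; instead each $A_k$ prepares the $d$-dimensional state $\ket{x_k}$ if we only needed $x_k$, but we need both. The right construction is: $A_1$ sends $\ket{y_1}$ and each $A_k$ ($k\ge 2$) sends $\ket{y_k}$, plus we need the $x_i$'s — so instead let half the job be done ``for free'': have $B$ always output $b_k=y_k-y_1$ correctly (possible, since the states received encode all the $y_i$'s), and have $b_1$ be a guess, correct with probability $1/d$. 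This yields $\mathcal{A}_{n,d}=1/d$ with a fully product state and product measurement, matching the bound. (One checks the dimension constraint is met since each transmitted system is a single $d$-dimensional basis state.)

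**Main obstacle.** The delicate point is step two: rigorously reducing $B$'s joint, possibly-entangled POVM on $\mathbb{C}^d\otimes\cdots\otimes\mathbb{C}^d$ to a pair of local response functions sharing classical randomness, when $\rho$ factorises across $\{S,\bar S\}$. One cannot simply say ``$B$'s measurement is local''; rather, one fixes $\rho=\rho^S\otimes\rho^{\bar S}$, notes that $P(\mathbf{b}|\mathbf{x},\mathbf{y})=\tr{(\sigma^S_{\mathbf{x}_S\mathbf{y}_S}\otimes\sigma^{\bar S}_{\mathbf{x}_{\bar S}\mathbf{y}_{\bar S}})M_\mathbf{b}}$ where $\sigma$ denotes the transformed local states, and then treats $\sigma^{\bar S}_{\mathbf{x}_{\bar S}\mathbf{y}_{\bar S}}$ together with the $M_\mathbf{b}$ as defining, for each fixed eigen-decomposition component, a valid (sub-normalised) response strategy on the $S$ side conditioned only on $(\mathbf{x}_S,\mathbf{y}_S)$. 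Making this bookkeeping airtight — including the marginalisation of $\mathbf{b}$ over the side whose target is unconstrained-looking — is where the real work lies; the group-theoretic ``you must guess a uniformly random element'' estimate is then immediate.
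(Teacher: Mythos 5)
There is a genuine gap, and it sits exactly where you flagged it: the claimed factorisation $P(\mathbf{b}|\mathbf{x},\mathbf{y})=\sum_\lambda P_S(\mathbf{b}_S|\mathbf{x}_S,\mathbf{y}_S,\lambda)\,P_{\bar S}(\mathbf{b}_{\bar S}|\mathbf{x}_{\bar S},\mathbf{y}_{\bar S},\lambda)$ is false, and no ``Choi/teleportation'' bookkeeping can rescue it. The obstruction is structural: $B$ is a single agent who receives the physical systems from \emph{both} sides of the bipartition and measures them jointly, so the marginal distribution of $\mathbf{b}_S$ can depend arbitrarily on $(\mathbf{x}_{\bar S},\mathbf{y}_{\bar S})$ (for instance $B$ can read $y_2$ out of $A_2$'s transmitted system and copy it into $b_1$). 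Your decomposition would force that marginal to be independent of the other side's inputs, i.e., it imposes a no-signalling/Bell-locality constraint between the two output registers that simply does not hold in this prepare--transform--measure scenario. A concrete internal contradiction: with the fully product strategy in which each $A_k$ sends $\ket{y_k}$ and $B$ outputs $b_k=y_k-y_1$ and guesses $b_1$, one attains $\mathcal{A}_{n,d}=1/d$; but under your factorisation (for a bipartition with $1\in S$ and $\bar S\neq\emptyset$) the $\bar S$-register would have to guess $y_1$ (probability $1/d$) \emph{and} the $S$-register would have to guess $\sum_{i\in\bar S}x_i$ (probability $1/d$), giving at most $1/d^2$ for any product state. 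Since $1/d$ is achievable with product states, the factorisation cannot hold. (Relatedly, your claim that ``the other side can at best achieve $1$'' is also not right: the side containing $A_1$ is missing $\sum_{i\in\bar S}x_i$, which it needs for $b_1$.)

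The mechanism that actually forces the bound is not locality but an information bottleneck, and this is the route the paper takes. Fix a pure product state across $\{S,\bar S\}$ and \emph{relax} the task: let the parties in $S$ send their classical inputs to $B$ for free (this only increases the score, and since there is no entanglement across the cut nothing more is needed from that side), and merge the parties in $\bar S$ into one agent $R$ who must convey, through a $d^{|\bar S|}$-dimensional quantum system, the $|\bar S|+1$ independent uniform dits $\sum_{i\in\bar S}x_i$ and $\{y_i\}_{i\in\bar S}$ that $B$ needs in order to satisfy all the conditions. A standard compression bound, $\Tr[\rho_x M_x]\le\lambda_{\max}(M_x)\le\Tr[M_x]$ summed over a POVM on a $d^{|\bar S|}$-dimensional space, shows $B$ recovers this data (and hence wins) with probability at most $d^{|\bar S|}/d^{|\bar S|+1}=1/d$. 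Your reduction to pure product states by linearity/convexity and your saturating classical strategy are both fine and match the paper; it is the core inequality whose proof needs to be replaced by this dimension-counting argument.
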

\begin{proof}
The full details of the proof are given in Appendix~\ref{A}.
In order to prove the upper bound in~\eqref{state}, we consider a relaxed SDI task in which any distribution $P(\mathbf{b}| \mathbf{x},\mathbf{y})$ obtainable in the original task is also possible in the relaxed setting, but not vice versa. The relaxation is chosen so that the average score $\mathcal{A}_{n,d}$ can easily be upper-bounded. By construction, the upper bound obtained in the relaxed scenario is also valid for the original task.

To see that the bound is tight, we give a strategy that utilises only product states and saturates the bound~\eqref{state}. Let $A_k$ (for $k=1,\ldots, n$) send $y_k$ to party $B$. With this information, $B$ can output $b_i=y_i-y_1$, satisfying condition $C_i$, for $i=2,\ldots, n$. However, this strategy forces $B$ to guess $b_1$ in order to satisfy condition $C_1$. Any such guess succeeds, on average, with probability $1/d$, thus saturating the bound. 
\end{proof}

In order to show the relevance of the relation in Eq.~\eqref{state}, we show that it can be violated by GME states. In particular, we first consider the largest achievable value of $\mathcal{A}_{n,d}$. It turns out that the algebraically maximal value, i.e., $\mathcal{A}_{n,d}=1$, can be achieved for all $n$ and $d$ via the following strategy. A GME state of $n$ subsystems of local dimension $d$, namely the generalised GHZ state,
\begin{equation} 
\ket{\text{GHZ}_{n,d}}=\frac{1}{\sqrt{d}}\sum_{i=0}^{d-1}\ket{i}^{\otimes n} \,
\end{equation}
is distributed among the parties $A_1,\ldots, A_n$. Each party then performs the unitary transformation $U^{A_k}_{x_ky_k}=Z^{x_k}X^{y_k}$ for $k=1,\ldots, n$, where
\begin{equation}\label{Unitaries}
Z=\sum_{j=0}^{d-1} e^{2i\pi j /d}\ketbra{j}{j} \, , \, \quad  X=\sum_{j=0}^{d-1}\ketbra{j+1}{j} 
\end{equation}
are the usual clock and shift operators. 
Finally, $B$ performs a joint projective measurement in the basis of generalised GHZ states given by
\begin{equation}\label{BellMeas}
	\ket{M_{b}}=Z^{b_1}\otimes X^{b_2}\otimes \cdots\otimes X^{b_n}\ket{\text{GHZ}_{n,d}}.
\end{equation}
Note that in the simplest case of two qubits ($n=d=2$), the four unitaries are simply the three Pauli matrices and the identity matrix, while the measurement is the Bell-state measurement~\cite{BB93}. 

Let us now consider noisy GHZ states, i.e., mixtures of GHZ states with white noise: $\rho_{n,d}^{\text{GHZ}}(v)=v\,\ket{\text{GHZ}_{n,d}}\bra{\text{GHZ}_{n,d}}+(1-v)\id/d^n$, where $v\in[0,1]$ is the visibility of the state. 
In the strategy given above, for the white noise state one has $\mathcal{A}_{n,d}(\id/d^n)=1/d^n$. 
Hence, from the linearity of $\mathcal{A}_{n,d}$ in $\rho$, it follows that a violation of Eq.~\eqref{state} is obtained whenever $v+(1-v)/d^n>1/d$, that is, when $v>(d^{n-1}-1)/(d^n-1)$. We discuss the implications of this result in three separate cases of interest. 

(I) For two $d$-dimensional systems ($n=2$), the criterion is $v>1/(d+1)$ which is precisely the condition for the entanglement of $\rho_{2,d}^{\text{GHZ}}(v)$~\cite{HHH99}. 
Hence, every entangled isotropic bipartite state is certified by our protocol. 
Interestingly, such certification is impossible using Bell inequalities; for instance we note that the state $\rho_{2,2}^{\text{GHZ}}(v)$ has a local hidden variable model (for projective measurements) when $v<0.6829$~\cite{HQ17}, and will therefore not violate any (known or unknown) Bell inequality. 
For large $d$, such models are known for $v\leq \log d/d$~\cite{Almeida},
and, in the limit, known facet Bell inequalities allow for the certification of $\rho_{2,d}^{\text{GHZ}}(v)$ when $v>0.67$~\cite{CGLMP02}.
In contrast, as $d\to\infty$ our protocol can certify entanglement for arbitrary small $v$.

(II) In the case of a system of many qubits ($d=2$), our visibility criterion coincides with the condition for   $\rho_{n,2}^{\text{GHZ}}(v)$ to be GME \cite{RH12}. Hence our scheme can certify all noisy qubit GHZ states that are GME. Again, this would not be possible using Bell inequalities, as (for instance) the state $\rho_{3,2}^{\text{GHZ}}(v=1/2)$ is GME but admits a biseparable model reproducing all correlations from projective measurements~\cite{BG11}. 
Furthermore, the GME of $\rho_{3,2}^{\text{GHZ}}(v)$ is known to be certifiable via a Bell inequality (in the limit of many measurements) when $v> 0.64$ \cite{PV11}, whereas our criterion reads $v>3/7$. 

(III) When considering many high-dimensional systems ($n>2$ and $d>2$), our setup is no longer optimal since there exist generalised GHZ states that are GME below the critical visibility of our scheme~\cite{CH17}. 
Nevertheless, choosing, for instance, $n=d=3$, the criterion is $v>4/13$ which substantially outperforms the certification obtainable via known Bell inequalities, which is possible when $v>0.81$~\cite{BG11,MR16}.

More generally, we derive a lower bound on the maximal value of $\mathcal{A}_{n,d}$ achievable for an arbitrary state $\rho$. 
To this end, consider the following strategy. 
Let $B$ perform the measurement given by~\eqref{BellMeas}, and let each party $A_k$ perform the transformation $\mathcal{T}^{(k)}_{x_ky_k}[\rho]=(U^{A_k}_{x_ky_k})\,\Lambda_k[\rho]\,(U^{A_k}_{x_ky_k})^\dagger$, where $U^{A_k}_{x_ky_k}=Z^{x_k}X^{y_k}$ and the $\Lambda_k$, for $k=1,\dots,n$, are CPTP maps.
Evaluating the score with this strategy and optimising over the CPTP maps $\Lambda_1,\dots,\Lambda_n$ straightforwardly leads to
\begin{align}\label{lowerbound}
 \mathcal{A}_{n,d}(\rho) = \mathrm{EGF}_{n,d}(\rho),
\end{align}
where we have defined the quantity $\mathrm{EGF}_{n,d}(\rho)=\max_{\Lambda_1,\dots,\Lambda_n} \Tr\left[\left(\bigotimes_{k=1}^n \Lambda_i\right)\!\big[\rho\right]\!\cdot\!\ketbra{\text{GHZ}_{n,d}}{\text{GHZ}_{n,d}} \big]$.
If one instead maximises only over unitary maps $\Lambda_k[\rho]=V_k\rho V_k^\dagger$ one obtains $A_{n,d}=\mathrm{GF}_{n,d}(\rho)$, where 
$\mathrm{GF}_{n,d}(\rho)=\max_{V_1,\dots,V_n} \Tr\!\big[\left(\bigotimes_{k=1}^n V_i\right)\rho\left(\bigotimes_{k=1}^n V_i\right)^\dagger  \!\ketbra{\text{GHZ}_{n,d}}{\text{GHZ}_{n,d}}\big]$ is the \emph{GHZ fraction}~\cite{Xu16}, a multipartite generalisation of the singlet fraction~\cite{HHH99}.
$\mathrm{EGF}_{n,d}(\rho)$ can then be seen as the ``extractable'' GHZ fraction, an important generalisation since, even in the bipartite case, local CPTP maps can increase the singlet fraction of an entangled state~\cite{badziag}.

In order to determine whether one can obtain a better score than that given by Eq.~\eqref{lowerbound} by considering arbitrary transformations and measurements, we conducted extensive numerical tests.  
Focusing on the cases $(n,d)\in\{(2,2),(2,3),(3,2)\}$ and optimising numerically $\mathcal{A}_{n,d}$ starting from randomly chosen transformations and measurements, we were unable to obtain a better score than $\mathrm{EGF}_{n,d}(\rho)$.
We note also that, when restricted to unitary transformations, we were similarly unable to obtain a score larger than $\mathrm{GF}_{n,d}(\rho)$.
Motivated by this numerical evidence, we make the following conjecture: 
\begin{conjecture}
	Let $\rho$ be an $n$-partite state of local dimension $d$.  
	Then the maximal value of $\mathcal{A}_{n,d}$ achievable for any measurement $\{M_\mathbf{b}\}_\mathbf{b}$ and transformations $\{\mathcal{T}^{(k)}_{x_ky_k}\}_k$ is $\mathrm{EGF}_{n,d}(\rho)$, i.e.,
	\begin{equation}
	\max_{\{\mathcal{T}^{(k)}_{x_ky_k}\},\{M_\mathbf{b}\}_\mathbf{b}} \mathcal{A}_{n,d}(\rho)={\mathrm{EGF}}_{n,d}(\rho).
	\end{equation}
\end{conjecture}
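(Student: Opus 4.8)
The inequality $\mathcal{A}_{n,d}(\rho)\ge\mathrm{EGF}_{n,d}(\rho)$ is already supplied by the explicit strategy of Eq.~\eqref{lowerbound} (covariant transformations $\mathcal{T}^{(k)}_{x_ky_k}[\cdot]=Z^{x_k}X^{y_k}\,\Lambda_k[\cdot]\,(Z^{x_k}X^{y_k})^\dagger$ followed by the GHZ-basis measurement~\eqref{BellMeas}), so the plan is to establish the matching bound $\max_{\{\mathcal{T}\},\{M\}}\mathcal{A}_{n,d}(\rho)\le\mathrm{EGF}_{n,d}(\rho)$ in two stages. \emph{Stage 1 --- reduction to covariant strategies.} I would first try to show the maximum is attained by a strategy in which every transformation has the form $\mathcal{T}^{(k)}_{x_ky_k}[\cdot]=Z^{x_k}X^{y_k}\Lambda_k[\cdot](Z^{x_k}X^{y_k})^\dagger$ for a fixed CPTP map $\Lambda_k$ and $B$'s POVM is the GHZ-basis measurement post-processed by a random classical shift $\mathbf{b}\mapsto\mathbf{b}\oplus\mathbf{e}$ drawn from some distribution $\mu$. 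The natural tool is a twirl over $G=(\mathbb{Z}_d\times\mathbb{Z}_d)^{\times n}$ acting on strategies by relabelling the inputs $(\mathbf{x},\mathbf{y})\mapsto(\mathbf{x}\oplus\mathbf{a},\mathbf{y}\oplus\mathbf{c})$, conjugating $A_k$'s output by $Z^{a_k}X^{c_k}$, conjugating $B$'s joint system by the inverse, and relabelling $B$'s outcome by the \emph{input-independent} shift $\bigl(\sum_k a_k,\,c_2-c_1,\dots,c_n-c_1\bigr)$: one checks directly that this leaves $\mathcal{A}_{n,d}$ invariant, and that a strategy is $G$-fixed if and only if it has exactly the covariant form just described.

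\emph{Stage 2 --- stabiliser structure.} Assuming Stage 1, I would exploit that $\ket{\mathrm{GHZ}_{n,d}}$ is a stabiliser state: its stabiliser group $\mathcal S=\langle Z_1Z_2^{-1},\dots,Z_{n-1}Z_n^{-1},\,X_1\cdots X_n\rangle$ has order $d^n$ and acts with multiplicity one on $(\mathbb{C}^d)^{\otimes n}$, so the commutant of $\mathcal S$ is spanned precisely by the projectors $\ketbra{M_\mathbf{b}}{M_\mathbf{b}}$ of~\eqref{BellMeas}. Since $C(\mathbf{x},\mathbf{y})$ is $\mathbb{Z}_d$-linear, for fixed $\mathbf b$ the inputs with $C(\mathbf{x},\mathbf{y})=\mathbf b$ form a coset of $\ker C$, and one shows $\bigotimes_k Z^{x_k}X^{y_k}$ equals (up to a phase) $V_\mathbf{b}\,s$ with $V_\mathbf{b}=Z^{b_1}\otimes X^{b_2}\otimes\cdots\otimes X^{b_n}$ and $s$ ranging over all of $\mathcal S$. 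Hence summing over $\{C=\mathbf b\}$ implements the twirl $\mathcal G[\cdot]=|\mathcal S|^{-1}\sum_{s\in\mathcal S}s^\dagger(\cdot)s$, i.e.\ dephasing in the GHZ basis, and a short computation collapses the score of any covariant strategy to $\sum_\mathbf{e}\mu_\mathbf{e}\,\bracket{M_\mathbf{e}}{(\bigotimes_k\Lambda_k)[\rho]}{M_\mathbf{e}}$. Being a convex combination, this is at most $\max_\mathbf{e}\bracket{M_\mathbf{e}}{(\bigotimes_k\Lambda_k)[\rho]}{M_\mathbf{e}}$; and because $V_\mathbf{e}$ is a product of local unitaries, $\bracket{M_\mathbf{e}}{(\bigotimes_k\Lambda_k)[\rho]}{M_\mathbf{e}}=\Tr\bigl[(\bigotimes_k\Lambda_k')[\rho]\,\ketbra{\mathrm{GHZ}_{n,d}}{\mathrm{GHZ}_{n,d}}\bigr]\le\mathrm{EGF}_{n,d}(\rho)$ with $\Lambda_k'$ CPTP, which closes the argument.

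The hard part is Stage 1, and this is presumably why the statement is only conjectural. The twirl described above does leave $\mathcal{A}_{n,d}$ invariant, but writing $\mathcal{A}_{n,d}=\Tr[\rho W]$ with $W=d^{-2n}\sum_{\mathbf{x},\mathbf{y}}\bigl(\bigotimes_k(\mathcal{T}^{(k)}_{x_ky_k})^\dagger\bigr)[M_{C(\mathbf{x},\mathbf{y})}]$ one finds that the group action sends $W$ to itself, so it does not by itself force covariance (the obstruction is the multilinearity of $\mathcal{A}_{n,d}$ in the $\mathcal{T}^{(k)}$, which prevents collapsing the randomised twirl to a single covariant strategy). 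I would therefore attack Stage 1 via the SDP over $B$'s measurement instead: for fixed transformations, $\max_{\{M_\mathbf{b}\}}\mathcal{A}_{n,d}=d^{-2n}\min\{\Tr K:\ K\succeq\Sigma_\mathbf{b}\ \forall\mathbf{b}\}$ with $\Sigma_\mathbf{b}=\sum_{C(\mathbf{x},\mathbf{y})=\mathbf{b}}(\bigotimes_k\mathcal{T}^{(k)}_{x_ky_k})[\rho]$, and the task reduces to exhibiting, for every choice of transformations, a dual-feasible $K$ of trace at most $d^{2n}\,\mathrm{EGF}_{n,d}(\rho)$ --- a certificate whose structure the numerical optima should help guess. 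A further alternative would be to first reduce to pure $\rho$ and treat general transformations there, but since both sides are support functions of operator sets over the PSD cone, agreement on pure states does not obviously imply agreement in general, so this reduction would itself need justification.
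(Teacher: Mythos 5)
This statement is presented in the paper only as a \emph{conjecture}, supported by numerical evidence for $(n,d)\in\{(2,2),(2,3),(3,2)\}$; the paper contains no proof, so there is nothing to compare your argument against except the lower-bound direction, which the paper does establish via the explicit covariant strategy of Eq.~\eqref{lowerbound} exactly as you say. Your proposal is therefore judged on its own terms, and it contains a genuine gap which, to your credit, you identify yourself: Stage~1, the reduction to covariant strategies, does not follow from the twirling argument. The score $\mathcal{A}_{n,d}$ is multilinear in the $n$ transformations and the measurement, so although each group element $g$ maps a strategy $S$ to a strategy $S^g$ with the same score, the ``averaged'' object $\frac{1}{|G|}\sum_g S^g$ is a shared-randomness mixture of strategies rather than a single strategy of the product form, and its score is not the average of the scores. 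One cannot conclude that the optimum is attained on the $G$-fixed (covariant) set, which is precisely where all the difficulty of the conjecture lives. Your Stage~2 is sound \emph{conditional} on Stage~1: the preimages $C^{-1}(\mathbf{b})$ are cosets of $\ker C=\{(\mathbf{x},\mathbf{y}):\sum_k x_k=0,\ y_1=\dots=y_n\}$, whose image under $(\mathbf{x},\mathbf{y})\mapsto\bigotimes_k Z^{x_k}X^{y_k}$ is (up to phases) the stabiliser of $\ket{\text{GHZ}_{n,d}}$, so the coset sum dephases in the GHZ basis and the covariant score collapses to a convex combination bounded by $\mathrm{EGF}_{n,d}(\rho)$. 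But this only shows that covariant strategies cannot beat $\mathrm{EGF}_{n,d}(\rho)$, which is a strictly weaker statement than the conjecture.

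The alternative routes you sketch (a dual SDP certificate for $B$'s measurement at fixed transformations, or a reduction to pure states) are reasonable directions of attack but are likewise not carried out, and you correctly note that the pure-state reduction itself requires justification. In short: the proposal is an honest and well-structured plan whose essential step remains open, consistent with the statement's status in the paper as an unproven conjecture.
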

Proving this conjecture would be particularly interesting in the bipartite case, as violation of our witness would then certify an extractable singlet fraction greater than $1/d$, which implies that maximal entanglement can be distilled from the state $\rho$~\cite{HHH99}. 

We conclude this part, by discussing other classes of GME states, that are qualitatively inequivalent to GHZ states. 
Let us first consider noisy W states of three qubits, i.e., $W(v)=v\,\ketbra{W}{W}+(1-v)\id/8$, where $\ket{W}=\tfrac{1}{\sqrt{3}}\big(\ket{001}+\ket{010}+\ket{100}\big)$. 
Numerical optimisation gives a (seemingly optimal) strategy obtaining $\mathcal{A}_{3,2}=\frac{1}{8}(1+5v)$.
This implies that our scheme certifies the GME of $W(v)$ for $v>3/5$. 
This is relatively close to the optimal visibility for GME of $v>0.48$~\cite{JM11}. 
In comparison, known DI schemes based on Bell inequalities would require $v>0.72$~\cite{BG11}. 

As a second illustrative example, we consider a noisy four-qubit Dicke state $D(v)=v\,\ketbra{D}{D}+(1-v)\id/16$, where $\ket{D}=\frac{1}{\sqrt{6}}\big(\ket{0011}+\ket{0101}+\ket{0110}+\ket{1001}+\ket{1010}+\ket{110}\big)$. 
Numerically, the best strategy we find certifies the GME of $D(v)$ for $v>7/11\approx 0.64$, while it is known to be GME for $v>0.46$~\cite{JM11}.

\emph{Characterising entangled measurements.---}%
Next, we consider characterising the entanglement of the joint measurement performed by $B$. 
A measurement $\{M_\mathbf{b}\}_\mathbf{b}$ is said to be entangled if at least one measurement operator $M_\mathbf{b}$ does not have a fully separable decomposition $M_\mathbf{b}=\sum_i M_{\mathbf{b},i}$ where $M_{\mathbf{b},i}\ge 0$ and each $M_{\mathbf{b},i}$ has the tensor product form $M_{\mathbf{b},i}=\bigotimes_{k=1}^n M_{\mathbf{b},i}^{(k)}$. 
We will see that the separability of $\{M_\mathbf{b}\}_\mathbf{b}$ imposes a nontrivial bound on $\mathcal{A}_{n,d}$ which will allow us to certify the entanglement of the measurement used. 
However, it is sufficient for a single measurement operator to be entangled in order for the measurement to be entangled. 
This qualitative property rules out a classical description, but reveals little about the extent to which entanglement is present in the measurement. 
Interestingly, we can go a step further and show that the value of $\mathcal{A}_{n,d}$ implies a bound on the minimum number of the $d^n$ measurement operators that are entangled. 
This provides a much finer characterisation of the joint entangled measurement performed by $B$.

\begin{theorem}\label{res2}
Let $\{M_\mathbf{b}\}_\mathbf{b}$ be a joint measurement of an $n$-partite system of local dimension $d$ with at least $k\in\{0,\ldots,d^n\}$ fully separable measurement operators. 
For any $n$-partite state $\rho$ and any transformations $\{\mathcal{T}^{(k)}_{x_ky_k}\}_k$ it holds that
\begin{equation}\label{measurement}
\substack{\emph{At least $k$ separable}\\ \emph{measurement operators}} \implies \mathcal{A}_{n,d}\leq \frac{1}{d^n}\left(d^n-k+\frac{k}{d}\right).
\end{equation}
Hence, a violation of this inequality for a particular $k$ implies that at least $d^n-k+1$ measurement operators are entangled.
\end{theorem}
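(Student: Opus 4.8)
The plan is to reorganise the score~\eqref{test} by grouping the uniformly random inputs according to the unique winning outcome they demand. Setting $W_\mathbf{b}=\{(\mathbf{x},\mathbf{y}):C(\mathbf{x},\mathbf{y})=\mathbf{b}\}$, the conditions~\eqref{wincond} give $|W_\mathbf{b}|=d^{n}$ for every $\mathbf{b}$ (the requirement $b_1=\sum_ix_i$ fixes $\mathbf{x}$ to a hyperplane of $d^{n-1}$ points, while $b_2,\dots,b_n$ fix $y_2,\dots,y_n$ in terms of the free value $y_1$), so that $\mathcal{A}_{n,d}=\frac{1}{d^{n}}\sum_\mathbf{b}\Tr[\sigma_\mathbf{b}M_\mathbf{b}]$ with
\begin{equation*}
\sigma_\mathbf{b}:=\frac{1}{d^{n}}\sum_{(\mathbf{x},\mathbf{y})\in W_\mathbf{b}}\Big(\bigotimes_{k=1}^{n}\mathcal{T}^{(k)}_{x_ky_k}\Big)[\rho],
\end{equation*}
each $\sigma_\mathbf{b}$ being a genuine density operator on $(\mathbb{C}^{d})^{\otimes n}$. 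I would then split the sum over $\mathbf{b}$ according to whether $M_\mathbf{b}$ is fully separable, letting $S$ denote the set of separable indices ($|S|\ge k$). For $\mathbf{b}\notin S$ one only needs the trivial estimate $\Tr[\sigma_\mathbf{b}M_\mathbf{b}]\le 1$, which follows from $0\le M_\mathbf{b}\le\id$.

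The crux is a \emph{collective} bound on the separable branch, namely $\sum_{\mathbf{b}\in S}\Tr[\sigma_\mathbf{b}M_\mathbf{b}]\le |S|/d$, and I would obtain it by a relaxation in the spirit of the proof of Result~\ref{res1}. Decomposing each separable $M_\mathbf{b}=\sum_i\bigotimes_k M^{(k)}_{\mathbf{b},i}$ and passing to the finer, product-form POVM $\{\bigotimes_k M^{(k)}_{\mathbf{b},i}\}$ completed by the single extra effect $\id-\sum_{\mathbf{b}\in S}M_\mathbf{b}\ge 0$, all accepting operators on this branch factorise across the $n$ systems. One then argues that such a $B$ cannot do better, on the separable outcomes, than the strategy already exhibited in the proof of Result~\ref{res1}: the parties $A_1,\dots,A_n$ neither communicate nor share any correlation beyond $\rho$ and common randomness, so the coupling function $C_1(x)=\sum_ix_i$ genuinely depends on all $n$ inputs and cannot be reconstructed by a $B$ whose acceptance factorises, once each party's single $d$-dimensional channel is committed to conveying the data needed for the remaining conditions $C_2,\dots,C_n$; optimising this trade-off (exactly as in the strategy where each $A_k$ forwards $y_k$ and $B$ guesses $b_1$) caps the average contribution of each separable outcome at $1/d$. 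Assembling the two parts and using that $\frac{1}{d^{n}}\big((d^{n}-|S|)+|S|/d\big)$ is decreasing in $|S|$, so that the worst case is $|S|=k$, yields $\mathcal{A}_{n,d}\le\frac{1}{d^{n}}\big(d^{n}-k+\tfrac{k}{d}\big)$, which is~\eqref{measurement}. The concluding statement then follows by contraposition: if this inequality is violated for some $k$, then $\{M_\mathbf{b}\}_\mathbf{b}$ has at most $k-1$ separable operators, i.e.\ at least $d^{n}-k+1$ entangled ones.

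I expect the collective bound to be the one delicate point, and it must genuinely use the game's structure. The naive estimate $\Tr[\sigma_\mathbf{b}M_\mathbf{b}]\le\|M_\mathbf{b}\|_\infty$ is hopeless --- it only gives $\mathcal{A}_{n,d}\le1$ --- and indeed no \emph{per-element} bound of the form $\Tr[\sigma_\mathbf{b}M_\mathbf{b}]\le1/d$ can hold (take $M_\mathbf{b}=\id$, which is separable and has trace overlap $1$ with any state). What has to be exploited simultaneously is the tensor-product structure of the separable effects, the no-signalling between the $A_k$'s, the particular form of the winning functions $C_1,\dots,C_n$, and the normalisation $\sum_\mathbf{b}M_\mathbf{b}=\id$ (which prevents the $M_\mathbf{b}$ with $\mathbf{b}\in S$ from all being large), so that the argument is really a $k$-parametrised refinement of the relaxation behind Result~\ref{res1}. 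As a consistency check, putting $k=d^{n}$ recovers $\mathcal{A}_{n,d}\le 1/d$ --- the same value as the biseparable-state bound of Result~\ref{res1}, and saturated by the same fully separable strategy.
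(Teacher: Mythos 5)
Your reduction of the score to $\mathcal{A}_{n,d}=\frac{1}{d^{n}}\sum_\mathbf{b}\Tr[\sigma_\mathbf{b}M_\mathbf{b}]$ with $|W_\mathbf{b}|=d^{n}$ is correct, but the step you yourself flag as the crux --- the collective bound $\sum_{\mathbf{b}\in S}\Tr[\sigma_\mathbf{b}M_\mathbf{b}]\le |S|/d$ --- is not merely unproven, it is false. Take every $\mathcal{T}^{(k)}_{x_ky_k}$ to be the constant channel $\sigma\mapsto\ketbra{0}{0}$, so that $\sigma_\mathbf{b}=\ketbra{0}{0}^{\otimes n}$ for every $\mathbf{b}$, and let $M_{\mathbf{0}}=\ketbra{0}{0}^{\otimes n}$ with the remaining $d^{n}-1$ operators chosen as an entangled orthonormal completion of the basis (for $n=d=2$ one can take three entangled unit vectors spanning $\mathrm{span}\{\ket{01},\ket{10},\ket{11}\}$, e.g.\ $\tfrac{1}{\sqrt2}(\ket{01}+\ket{10})$ and $\tfrac12\ket{01}-\tfrac12\ket{10}\pm\tfrac{1}{\sqrt2}\ket{11}$). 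Then $S=\{\mathbf{0}\}$ and $\sum_{\mathbf{b}\in S}\Tr[\sigma_\mathbf{b}M_\mathbf{b}]=\Tr[\sigma_{\mathbf{0}}M_{\mathbf{0}}]=1>1/d$. The theorem itself is not threatened here, because $\sum_\mathbf{b}M_\mathbf{b}=\id$ forces the entangled branch to be tiny ($\mathcal{A}_{n,d}=1/d^{n}$ in this example) --- and that is exactly the point: the normalisation couples the two branches, so any proof that bounds the separable branch and the entangled branch \emph{independently}, which is the architecture you propose, cannot go through. You correctly list the normalisation among the ingredients that ``must be exploited'', but your decomposition discards it before it can act on the separable branch.

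The paper's proof keeps this coupling explicit. For a product effect $\bigotimes_k M^{(k)}_\mathbf{b}$ it first peels off parties $2,\dots,n$ via the elementary estimate $\Tr[\sigma_{AB}(M\otimes N)]\le\Tr[\sigma_A M]\,\lambda_{\max}[N]$ (Lemma~\ref{lemma}) together with $\lambda_{\max}\le\Tr$, and then uses the partial trace of $\sum_\mathbf{b}M_\mathbf{b}=\id$ over subsystems $2,\dots,n$ to rewrite $\sum_{\mathbf{b}\in\mathrm{SEP}}M^{(1)}_\mathbf{b}\prod_{k\ge2}\Tr[M^{(k)}_\mathbf{b}]$ as $d^{n-1}\id$ \emph{minus} the reduced entangled operators. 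The separable branch is thus bounded by $1/d$ minus a quantity controlled by the entangled operators, and it is this negative cross term, recombined with the bound $\Tr[\sigma_\mathbf{b}M_\mathbf{b}]\le\lambda_{\max}[M_\mathbf{b}]\le 1$ on the entangled branch, that yields the $k$-dependence. No per-element or per-branch communication argument in the style of Result~\ref{res1} can substitute for this: the statement you reduce the problem to is simply not true, so the heuristic ``trade-off'' paragraph is aimed at an unprovable target. To repair the proposal you would need to establish a coupled inequality of the form $\sum_{\mathbf{b}\in S}\Tr[\sigma_\mathbf{b}M_\mathbf{b}]\le d^{n-1}-f\bigl(\{M_\mathbf{b}\}_{\mathbf{b}\notin S}\bigr)$ rather than $|S|/d$.
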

\begin{proof}
	The proof is given in Appendix~\ref{B}.
\end{proof}

In the extremal case of a fully separable measurement ($k=d^n$), the bound~\eqref{measurement} reduces to $\mathcal{A}_{n,d}\leq 1/d$, and observing a violation of this bound certifies the entanglement of the measurement.
In the other extreme of witnessing a measurement whose operators are all entangled (i.e., violating Eq.~\eqref{measurement} for $k=1$), the bound~\eqref{measurement} is $\mathcal{A}_{n,d}\leq 1-(d-1)/d^{n+1}$, which is nontrivial for all $n$ and $d$.
In Appendix~\ref{D} we give some partial results on the tightness of Eq.~\eqref{measurement}.

To demonstrate the usefulness of Result~\ref{res2}, we now discuss two illustrative examples in the bipartite case. 
First we consider a noisy version of a generalised Bell-state measurement (for arbitrary $d$), which, we
recall, is a joint measurement of two $d$-level systems given by the projection onto a basis of $d^2$ maximally entangled states $\{\ket{M_{b_1b_2}}\}_{b_1b_2}$. 
To this end, we consider the measurement operators $M_{b_1b_2}(v)=v\, \ketbra{M_{b_1b_2}}{M_{b_1b_2}}+(1-v)\id/d^2$. 
Note that each of these measurement operators is equivalent (up to local unitaries) to an isotropic state $\rho_{2,d}^{\text{GHZ}}(v)$, and is thus entangled precisely when $v>1/(d+1)$. 

To certify the entanglement of this measurement, consider again the strategy used previously in which the parties share a maximally entangled state of two $d$-level systems, and perform the unitary transformations $U^{A_k}_{x_ky_k}=Z^{x_k}X^{y_k}$. 
Note that the score obtained here is the same as when considering a shared noisy isotropic state $\rho_{2,d}^{\text{GHZ}}(v)$, combined with an ideal Bell-state measurement. 
It thus follows from the previous results that we can certify the entanglement of the measurement whenever $v>1/(d+1)$. 
Hence, we certify entanglement whenever the level of noise is low enough to keep the measurement operators entangled. 
Nevertheless, note that in this case our witness certifies only that at least one measurement operator is entangled, while all the measurement operators are in fact entangled. 
In order to certify the entanglement of all $d^2$ measurement operators, a much higher visibility of $v>(d^2 + d -1)/( d^2+d)$ is required. 
In the simplest case of qubits ($d=2$) this requirement is $v>5/6$.

The second example we consider is for the case $n=d=2$, where we look at a two-qubit Bell-state measurement subjected to coloured noise. 
Defining the usual Bell basis $\ket{\phi^{\pm}}=\frac{1}{\sqrt{2}}(\ket{00}\pm\ket{11})$, $\ket{\psi^{\pm}}=\frac{1}{\sqrt{2}}( \ket{01}\pm\ket{10})$, and $\Phi^{\pm}=\ketbra{\phi^{\pm}}{\phi^{\pm}}$, $\Psi^{\pm}=\ketbra{\psi^{\pm}}{\psi^{\pm}}$, consider the measurement given by the operators
\begin{align}
& E_{00}=v\Phi^++\frac{1-v}{4}\left(\Phi^++2\Phi^-+\Psi^+\right)\\
& E_{01}=v\Psi^++\frac{1-v}{4}\left(\Phi^++\Phi^-+2\Psi^+\right)\\
& E_{10}=v\Phi^-+\frac{1-v}{4}\left(2\Phi^++\Phi^-+\Psi^+\right)\\
& E_{11}=\Psi^-,
\end{align}
for some visibility $v\in[0,1]$. 
When $1/3< v\leq 1$, all four operators are entangled; when $0<v\leq 1/3 $, only $E_{01}$ and $E_{11}$ are entangled; and when $v=0$, only $E_{11}$ is entangled. 

Considering the same strategy as in the previous example (i.e., sharing a maximally entangled state and applying the transformations $U^{A_k}_{x_ky_k}$), we find $\mathcal{A}_{2,2}=(1+v)/2$. 
By virtue of Eq.~\eqref{measurement}, this certifies the presence of four entangled measurement operators when $v>3/4$, at least three when $v>1/2$, at least two when $v>1/4$, and at least one when $v>0$.

\emph{Conclusion.---}%
We presented a scheme for the semi-device-independent characterisation of multipartite entangled states and measurements. 
In particular, our scheme can both certify that states are GME, and provide a lower bound on the number of measurement operators that are entangled. 
We illustrated the relevance of our scheme in various case studies, showing strong robustness to noise. 
In particular, we showed that at the price of a dimensional bound, one can overcome fundamental limitations of entanglement certification using Bell inequalities, as our scheme can certify many entangled states admitting a local model. 

We conclude with some relevant open questions. 
(I) Does Eq.~\eqref{measurement} hold also for biseparable (rather than fully separable) measurement operators, thereby allowing the number of GME measurement operators to be bounded?
(II) Can a more sophisticated semi-DI scheme certify the entanglement of all GME states (not just particular classes of states, as shown here)? 
(III) Can our scheme be used as a dimension witness for both states and measurements (e.g., obtaining a score $\mathcal{A}_{n,d}=1$ seems to require a state and measurement of dimension at least $d^n$)?
(IV) Note that a score $\mathcal{A}_{n,d}=1$ implies that all the relations \eqref{wincond} are satisfied. This makes the scheme a good candidate for multiparty cryptographic tasks (e.g., secret sharing of classical data with quantum resources). Exploring this possibility would be interesting. 
(V) Can the scheme be used for self-testing states, measurements, and transformations in prepare-and-measure experiments~\cite{TK18}? 

\emph{Acknowledgements.---}%
We thank Cyril Branciard for discussions. This work was supported by the Swiss national science foundation (Starting grant DIAQ, NCCR-QSIT) and the French National Research Agency (Retour Post-Doctorants programme grant ANR-13-PDOC-0026).

\bibliography{SDI_GME}

\appendix
\onecolumngrid
\newpage
\section{Proof of Result \ref{res1}}\label{A}

In this section, we prove Result~\ref{res1}. 
Specifically, we bound the maximal value of $\mathcal{A}_{n,d}$ that can be obtained by biseparable states, regardless of the choice of transformations and measurements. 
Since $\mathcal{A}_{n,d}$ depends linearly on $\rho$, no mixed biseparable state can be used to obtain a larger score than some pure biseparable state. 
Hence, we need only consider pure states of the form $\ket{\chi}_S\equiv\ket{\psi}_S\otimes \ket{\phi}_{\bar{S}}$, for any nontrivial bipartition $\{S,\bar{S}\}$ of the set of subsystems $\{1,\ldots,n\}$. 

Consider $\ket{\chi}_S$ for a particular bipartition $\{S,\bar{S}\}$. 
In order to give an upper bound on $\mathcal{A}_{n,d}$, we will relax some of the constraints in the scenario considered by the scheme and then evaluate the maximal average score in this less restrictive setting. 
In particular, we consider the scenario in which the parties $\{A_k\}_{k\in S}$ are permitted to communicate unbounded information to $B$, while the remaining parties $\{A_k\}_{k\in \bar{S}}$ are grouped into a single ``effective'' party $R$, and which receives all their inputs. 
The party $R$ is allowed to send $|\bar{S}|$ $d$-level quantum systems (equivalently, a system of dimension $d^{|\bar{S}|}$) to  $B$. 
This scenario is illustrated in Fig.~\ref{Fig2}.

\begin{figure}
	\centering
	\includegraphics[width=0.42\columnwidth]{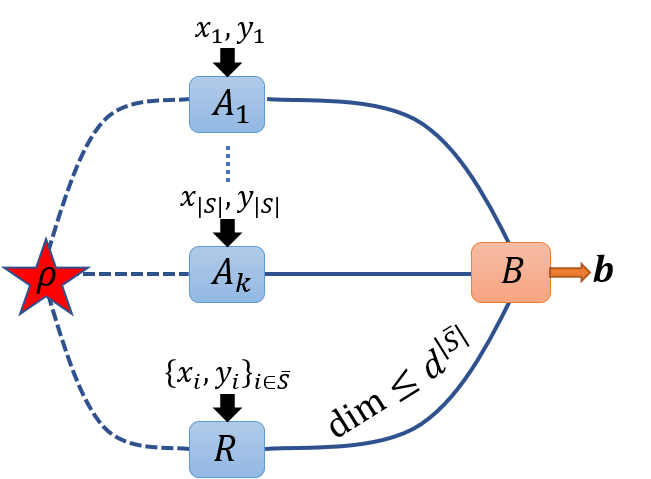}
	\caption{The modified scenario (cf.\ Fig.~\ref{fig}). Parties $\{A_k\}_{k\in S}$ are allowed unlimited communication to $B$, whereas the remaining parties $\{A_k\}_{k\in \bar{S}}$ are grouped into a single party, $R$, allowed $\bar{S}\log d$ bits of communication.
	The case shown is for $S=\{1,\dots,|S|\}$.
		}\label{Fig2}
\end{figure}

It is clear that any probability distribution $P(\mathbf{b}|\mathbf{x},\mathbf{y})$ obtainable in the original scenario on a state $\ket{\chi}_S$ (i.e., with some choice of transformations and measurements), is also obtainable in the relaxed scenario, but not vice versa. 
Since, by assumption, there is no entanglement over the bipartition $\{S,\bar{S}\}$, the parties $\{A_k\}_{k\in S}$ cannot do better than to simply send all their inputs to $B$. 
In order to win the game, $R$ therefore needs to communicate to $B$ the values of $\sum_{i\in \bar{S}} x_i$ and all $\{y_i\}_{i\in \bar{S}}$ for the conditions $C_1,\ldots, C_n$ to be satisfied. 
However, this amounts to $\left(|\bar{S}|+1\right)\log d$ bits of information, while $R$ can generally only send $|\bar{S}|\log d$ bits using a $d^{|\bar{S}|}$-level system, and must therefore employ a nontrivial optimal communication strategy. 
As we demonstrate more formally below, there is no quantum strategy that allows $B$ to know this information with a probability higher than $1/d$, and therefore cannot with the game with a probability better than this either. 
Consequently, the desired bound in Eq.~\eqref{state} follows. 

To this end, let us consider the following general setting for an information compression task between two parties, Alice and Bob.
Let Alice receive a uniformly distributed input $x\in\{1,\ldots,N\}$ which she must communicate to Bob. 
She encodes this input into a quantum state $\rho_x$ of dimension at most $d$, with $N>d$ (if $N\leq d$ then Alice can simply send $x$). 
This amounts to Alice compressing her input into a smaller message to send. 
This message is then sent to Bob, who must attempt to retrieve the value of $x$ from the state $\rho_x$ by performing a suitable measurement $\{M_b\}_{b=1}^{N}$. 
What is the average probability of success for Bob to correctly obtain $x$ following this measurement? 
A quantum strategy must obey the following bound:
\begin{align}\label{qbound}
p^{\text{success}}\equiv\frac{1}{N}\sum_{x=1}^{N}P^{\text{Q}}(b=x|x)= \frac{1}{N}\sum_{x=1}^{N}\Tr\left(\rho_xM_x\right) \leq \frac{1}{N}\sum_{x=1}^{N}\lambda_{\text{max}}\left(M_x\right) \leq\frac{1}{N}\sum_{x=1}^{N}\Tr\left(M_x\right)=\frac{1}{N}\Tr\left(\sum_{x=1}^{N}M_x\right)=\frac{d}{N},
\end{align}
where we have used the fact that the optimal state $\rho_x$ maximising $\Tr(\rho_x M_x)$ is the eigenvector of $M_x$ corresponding to its largest eigenvalue, and that $\lambda_{\text{max}}(M_x)\leq\Tr\left(M_x\right)$. In the last step we used that $\sum_{x}M_x=\id_d$ for any POVM $\{M_x\}_x$. 

Moreover, there is no quantum advantage over classical approaches to encode and decode this information. This is straightforwardly seen by noting that the quantum bound~\eqref{qbound} can be saturated with a classical strategy as follows. 
Let Alice send the classical message $m(x)=x$ whenever $x=1,\ldots,d$, and $m(x)=d$ if $x\in\{d+1,\ldots,N\}$. Bob always outputs the message he receives, i.e.\ $b=m(x)$. Whenever $x\in\{1,\ldots,d\}$, it is indeed true that $b=x$ and Bob correctly obtains $x$; when $x\in\{d+1,\ldots,N\}$, Bob never succeeds. The average success probability of this simple classical strategy reads
\begin{equation}
p^{\text{success}}\equiv \frac{1}{N}\sum_{x=1}^{N}P^{\text{C}}(b=x|x)=\frac{d}{N}.
\end{equation}
Hence, quantum theory provides no advantage over classical coding for the stated task. Note that the above bears resemblance to the Holevo bound, with the difference that we are probabilistically accessing the information.

We can now apply this result to the relaxed scenario under consideration in the proof of Result~\ref{res1}.
There, the effective party $R$ plays the role of Alice, and has to transmit to Bob which of the $d^{|\bar{S}|+1}$ possible inputs they received by encoding it in a $d^{|\bar{S}|}$-dimensional quantum system. 
From the above result, the average success probability of $B$ correctly guessing the inputs of $R$---and therefore winning the game---cannot be better than $1/d$, which is indeed the desired result.

\section{Proof of Result \ref{res2}}\label{B}

Here we present the proof of Eq.~\eqref{measurement}. We begin with a useful lemma, which is straightforward:
\begin{lemma}\label{lemma}
Let $\sigma_{AB}$ be a bipartite density matrix in $\mH_A\otimes\mH_B$ and $0\leq M\leq\id$ (resp.\ $N$) be a symmetric operator on $\mH_A$ (resp. $\mH_B$). Let $\sigma_{A}=\trr{B}{\sigma_{AB}}$.
Then the following inequality holds:
\begin{equation}
	\trr{AB}{\sigma_{AB}(M\otimes N)}\leq \trr{A}{\sigma_A M}\lambda_{\mathrm{max}}[N].
\end{equation}
\end{lemma}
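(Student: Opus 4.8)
\textbf{Proof plan for Lemma \ref{lemma}.}

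The plan is to write out the left-hand side explicitly in terms of $N$ and then bound $N$ from above by its largest eigenvalue times the identity. First I would note that, since $N$ is symmetric (Hermitian), the operator inequality $N \leq \lambda_{\mathrm{max}}[N]\,\id_B$ holds on $\mH_B$. Tensoring with the positive operator $M$ on $\mH_A$ (which preserves operator inequalities) gives $M\otimes N \leq \lambda_{\mathrm{max}}[N]\,(M\otimes \id_B)$ as operators on $\mH_A\otimes\mH_B$.

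Next, because $\sigma_{AB}\geq 0$, taking the trace against $\sigma_{AB}$ preserves this inequality, so
\begin{equation}
\trr{AB}{\sigma_{AB}(M\otimes N)} \leq \lambda_{\mathrm{max}}[N]\,\trr{AB}{\sigma_{AB}(M\otimes \id_B)}.
\end{equation}
Finally I would use the defining property of the partial trace, namely $\trr{AB}{\sigma_{AB}(M\otimes \id_B)} = \trr{A}{\trr{B}{\sigma_{AB}}\,M} = \trr{A}{\sigma_A M}$, which yields exactly the claimed bound. One small point worth a remark is that $\lambda_{\mathrm{max}}[N]\geq 0$ is needed for the second step to go through in the stated direction; this is automatic when $\mH_B$ is nontrivial and $\sigma_A M$ has nonnegative trace, and in the intended application $N$ is a measurement operator (hence positive semidefinite), so $\lambda_{\mathrm{max}}[N]\geq 0$ holds regardless.

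I do not anticipate a genuine obstacle here — the statement is flagged as ``straightforward'' precisely because it is just the operator monotonicity of the trace combined with the elementary bound $N\leq\lambda_{\mathrm{max}}[N]\,\id$. The only thing to be careful about is keeping track of which space each operator acts on and making sure the tensor-with-$M$ step is justified (it is, since $M\geq 0$ and $A\mapsto A\otimes M$ sends positive operators to positive operators, hence is monotone with respect to the operator order). No normalisation or purity assumptions on $\sigma_{AB}$ are needed.
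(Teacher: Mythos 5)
Your argument is correct and is essentially the paper's proof recast in operator-inequality form: the paper writes $N=\sum_i n_i\ketbra{i}{i}$ and bounds each eigenvalue $n_i$ by $\lambda_{\mathrm{max}}[N]$ using positivity of the conditional operators $\bracket{i}{\sigma_{AB}}{i}$ and of $M$, which is precisely the content of your chain $N\leq\lambda_{\mathrm{max}}[N]\,\id_B$, hence $M\otimes N\leq\lambda_{\mathrm{max}}[N]\,(M\otimes\id_B)$, followed by trace monotonicity against $\sigma_{AB}\geq 0$. Your caveat about the sign of $\lambda_{\mathrm{max}}[N]$ is not actually needed: the inequality $\trr{AB}{\sigma_{AB}(M\otimes N)}\leq\trr{AB}{\sigma_{AB}\,\lambda_{\mathrm{max}}[N](M\otimes\id_B)}$ follows directly from the operator inequality and the positivity of $\sigma_{AB}$, since the scalar is part of the right-hand operator rather than being pulled out before tracing.
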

\begin{proof}
Let $N=\sum_i n_i\ketbra{i}{i}$ be the spectral decomposition of $N$. Remark that $\sigma^{(i)}_A=\bracket{i}{\sigma_{AB}}{i}$ is an (unnormalised) positive semidefinite operator. We then have $\trr{AB}{\sigma_{AB}(M\otimes N)}=\sum_i n_i\trr{A}{\sigma^{(i)}_A M} \leq \lambda_{\mathrm{max}}[N]\trr{A}{\sum_i\sigma^{(i)}_A M}=\lambda_{\mathrm{max}}[N]\trr{A}{\sigma_A M}$.
\end{proof}

Equipped with this lemma, we can now prove the statement of Eq.~\eqref{measurement}. 
Party $B$ performs a measurement with $d^n$ different possible outcomes, described by measurement operators $\{M_\mathbf{b}\}_\mathbf{b}$ with $M_\mathbf{b}\geq 0$ and $\sum_\mathbf{b} M_\mathbf{b} = \id$.
Let $\text{SEP}$ be the set of strings $\mathbf{b}$ for which $M_\mathbf{b}$ is a separable measurement operator, with $\lvert\text{SEP}\rvert\ge k$;
for $\mathbf{b}\in\text{SEP}$ we thus have $M_\mathbf{b} = \sum_i \bigotimes_{k=1}^n M^{(k)}_{\mathbf{b},i}$.
We will initially assume for simplicity that these separable POVM elements have the simpler tensor product form $M_\mathbf{b}=\bigotimes_{k=1}^n M^{(k)}_\mathbf{b}$; we will see later that, because of the linearity of $\mathcal{A}_{n,d}$, the proof nonetheless holds for arbitrary separable operators.

The parties $A_1,\ldots,A_n$ may perform arbitrary transformations $\mathcal{T}^{(k)}_{x_ky_k}$, which are formally represented by completely positive trace-preserving (CPTP) maps. Such transformations can always be written as a unitary $U_{x_ky_k}^{(k)}$ applied jointly to the local system and some environment state $\xi_{\mE_k}$, with the environment being subsequently traced out. 
Formally, we have
\begin{equation}\label{general_mapping}
\mathcal{T}^{(k)}_{x_ky_k}:\sigma\mapsto\trr{{\mE_k}}{{U_{x_ky_k}^{(k)}} (\sigma\otimes\xi_{\mE_k} )U_{x_ky_k}^{(k)\dagger}}.
\end{equation}
We denote the Hilbert space for each party by $\mS_k$ and the total Hilbert space of the parties by $\mS=\bigotimes_{k=1}^{n}\mS_k$. 
Similarly, we denote the local and total Hilbert spaces of the environment by $\mE_k$ and $\mE=\bigotimes_{k=1}^{n}\mE_k$, respectively. 
The total initial environment state is thus $\xi_{\mE}=\bigotimes_{k=1}^n\xi_{\mE_k}$. 
(Note that in our SDI scheme we only assume a bound on the dimension of the output space of each party so, \emph{a priori}, this may be different from that of the input spaces so that the $U^{(k)}_{x_k y_k}$ instead map $\mS_k^i\otimes \mE_k^i \to \mS_k^f\otimes \mE_k^f$. 
For simplicity we assume the input and output spaces have the same dimensions in the proof below; the argument generalises trivially to the more general case.) 

Evaluating explicitly $\mathcal{A}_{n,d}$, we have
\begin{align}\label{sup1}
	\mathcal{A}_{n,d}&=\frac{1}{d^{2n}}\sum_{\substack{\mathbf{x},\mathbf{y},\mathbf{b}:\\\mathbf{b}=C(\mathbf{x},\mathbf{y})}} \trr{\mS}{\left(\bigotimes_{k=1}^{n} \mathcal{T}^{(k)}_{x_ky_k}\right)\!\left[\rho\right]\cdot M_\mathbf{b}}\notag\\
	&=\frac{1}{d^{2n}}\sum_{\substack{\mathbf{x},\mathbf{y},\mathbf{b}:\\\mathbf{b}=C(\mathbf{x},\mathbf{y})}} \trr{\mS\mE}{\left(\bigotimes_{k=1}^{n}U_{x_ky_k}^{(k)} \right)\left(\rho\otimes\xi_{\mE} \right)\left(\bigotimes_{k=1}^{n} U_{x_ky_k}^{(k)\dagger} \right)\left(M_\mathbf{b}\otimes\id_{\mE_k} \right)}\notag\\
	&=\frac{1}{d^{2n}} \sum_{\substack{\mathbf{x},\mathbf{y},\mathbf{b}\in\text{SEP}:\\\mathbf{b}=C(\mathbf{x},\mathbf{y})}}\trr{{\mS\mE}}{\left(\rho\otimes\xi_{\mE}\right) \bigotimes_{k=1}^{n}\left(U_{x_ky_k}^{(k)\dagger} \left(M^{(k)}_\mathbf{b}\otimes\id_{\mE_k}\right) U_{x_ky_k}^{(k)}\right) }\notag\\
	& \qquad\qquad\qquad + \frac{1}{d^{2n}} \sum_{\substack{\mathbf{x},\mathbf{y},\mathbf{b}\notin\text{SEP}:\\\mathbf{b}=C(\mathbf{x},\mathbf{y})}}\trr{{\mS\mE}}{\left(\rho\otimes\xi_{\mE}\right) \left(\bigotimes_{k=1}^{n} U_{x_ky_k}^{(k)\dagger}\right) \left(M_\mathbf{b}\otimes\id_{\mE_k}\right) \left( \bigotimes_{k=1}^{n}U_{x_ky_k}^{(k)}\right) }.		
\end{align}
Restricting ourselves to the first term above, we have 
\begin{align}\label{eq:T}
T& \equiv \frac{1}{d^{2n}} \sum_{\substack{\mathbf{x},\mathbf{y},\mathbf{b}\in\text{SEP}:\\\mathbf{b}=C(\mathbf{x},\mathbf{y})}}\trr{{\mS\mE}}{\left(\rho\otimes\xi_{\mE}\right) \bigotimes_{k=1}^{n}\left(U_{x_ky_k}^{(k)\dagger} \left(M^{(k)}_\mathbf{b}\otimes\id_{\mE_k}\right) U_{x_ky_k}^{(k)}\right) }\notag\\
& \le \frac{1}{d^{2n}}\sum_{\substack{\mathbf{x},\mathbf{y},\mathbf{b}\in\text{SEP}:\\\mathbf{b}=C(\mathbf{x},\mathbf{y})}} \trr{{\mS_1\mE_1}}{\left(\rho_{\mS_1}\otimes\xi_{\mE_1}\right)\left(U_{x_1y_1}^{(1)\dagger} \left(M^{(1)}_b\otimes\id_{\mE_1}\right) U_{x_1y_1}^{(1)}\right)}\prod_{k=2}^n\lambda_\text{max}[M^{(k)}_\mathbf{b}],
\end{align}
where $\rho_{\mS_1}=\trr{\mS_2\dots\mS_n}{\rho_{\mS}}$ and we have used Lemma~\ref{lemma} $n-1$ times, together with the identity $\lambda_\text{max}[U_{x_ky_k}^{(k)\dagger}( M^{(k)}_\mathbf{b}\otimes\id_{\mE_k}) U_{x_ky_k}^{(k)}]=\lambda_\text{max}[M^{(k)}_\mathbf{b}]$.

Since $\{M_\mathbf{b}\}_\mathbf{b}$ is a valid POVM it must satisfy $\sum_{\mathbf{b}}M_{\mathbf{b}}=\sum_{\mathbf{b}\in\text{SEP}}\bigotimes_{k=1}^n M^{(k)}_\mathbf{b} + \sum_{\mathbf{b}\notin\text{SEP}}M_{\mathbf{b}}=\id_\mS$. 
Tracing out subsystems $\{2,\ldots,n\}$, we see that 
\begin{equation}\label{part_trace}
\sum_{\mathbf{b}\in\text{SEP}} M^{(1)}_\mathbf{b}\prod_{k=2}^n \tr{M_\mathbf{b}^{(k)}}=d^{n-1}\id_{\mS_1} - \sum_{\mathbf{b}\notin\text{SEP}}\trr{\mS_2\cdots \mS_n}{M_{\mathbf{b}}}.
\end{equation}

By noting that, given the values of $y_1, x_1, \dots, x_{n-1}, \mathbf{b}$, the condition $\mathbf{b}=C(\mathbf{x},\mathbf{y})$ fixes the values of the remaining variables and that these remaining variables do not appear in the summand in Eq.~\eqref{eq:T}, we can rewrite the summation simply over $y_1, x_1, \dots, x_{n-1}, \mathbf{b}$.
Noting also that $0\le \lambda_\text{max}[P]\leq\tr{P}$ for any positive semidefinite operator $P$ and using Eq.~\eqref{part_trace} we have
\begin{align}\label{eq:Tbound}
	T&\leq \frac{1}{d^{2n}}\sum_{\substack{y_1,x_1,\dots, x_{n-1},\\\mathbf{b}\in\text{SEP}}} \trr{{\mS_1\mE_1}}{\left(\rho_{\mS_1}\otimes\xi_{\mE_1}\right)\left(U_{x_1y_1}^{(1)\dagger} \left(M^{(1)}_\mathbf{b}\prod_{k=2}^n\lambda_\text{max}[M^{(k)}_\mathbf{b}]\otimes\id_{\mE_1}\right) U_{x_1y_1}^{(1)}\right)}\notag\\
&\le \frac{1}{d^{2n}}\sum_{y_1,x_1,\dots, x_{n-1}} \trr{{\mS_1\mE_1}}{\left(\rho_{\mS_1}\otimes\xi_{\mE_1}\right)\left(U_{x_1y_1}^{(1)\dagger} \left(\sum_{\mathbf{b}\in\text{SEP}} M^{(1)}_\mathbf{b}\prod_{k=2}^n\tr{M^{(k)}_\mathbf{b}}\otimes\id_{\mE_1}\right) U_{x_1y_1}^{(1)}\right)}\notag\\
&= \frac{1}{d^{2n}}\sum_{y_1,x_1,\dots, x_{n-1}} \trr{{\mS_1\mE_1}}{\left(\rho_{\mS_1}\otimes\xi_{\mE_1}\right)\left(U_{x_1y_1}^{(1)\dagger} \left(d^{n-1}\id_{\mS_1\mE_1} - \sum_{\mathbf{b}\notin\text{SEP}}\trr{\mS_2\cdots \mS_n}{M_{\mathbf{b}}}\otimes\id_{\mE_1} \right) U_{x_1y_1}^{(1)}\right)}\notag\\
&\le \frac{1}{d} - \frac{1}{d^{2n}}\lambda_{\text{max}}\left[ \sum_{y_1,x_1,\dots, x_{n-1}}U_{x_1y_1}^{(1)\dagger} \left(\sum_{\mathbf{b}\notin\text{SEP}}\trr{\mS_2\cdots \mS_n}{M_{\mathbf{b}}}\otimes\id_{\mE_1} \right) U_{x_1y_1}^{(1)} \right].
\end{align}

We note briefly that one also obtains \eqref{eq:Tbound} if one considers general separable operators of the form $M_\mathbf{b} = \sum_i \bigotimes_{k=1}^n M^{(k)}_{\mathbf{b},i}$ (rather than simple tensor products).
This follows readily from the linearity of both Eqs.~\eqref{eq:T} and \eqref{part_trace}, so that the sum over separable measurement operators can be eliminated in the same way as above.

Continuing with the proof, note that for a positive semidefinite operator $P$ in a $d$-dimensional Hilbert space, $\lambda_{\text{max}}\left[P\right]\geq\frac{1}{d}\tr{P}$. Hence, letting $D$ be the dimension of $\mE_1$, we have
\begin{align}
T&\leq \frac{1}{d} - \frac{1}{Dd^{2n+1}}\trr{{\mS_1\mE_1}}{ \sum_{y_1,x_1,\dots, x_{n-1}}U_{x_1y_1}^{(1)\dagger} \left(\sum_{\mathbf{b}\notin\text{SEP}}\trr{\mS_2\cdots \mS_n}{M_{\mathbf{b}}}\otimes\id_{\mE_1} \right) U_{x_1y_1}^{(1)}}\notag\\
&= \frac{1}{d} - \frac{1}{Dd^{n+1}}\sum_{\mathbf{b}\notin\text{SEP}}\trr{{\mS\mE_1}}{M_{\mathbf{b}}\otimes\id_{\mE_1} }\notag\\
& \leq \frac{1}{d}-\frac{1}{d^{n+1}} \sum_{\mathbf{b}\notin\text{SEP}}\lambda_{\text{max}}\left[M_\mathbf{b}\right],
\end{align}
where, in the last step, we used the relation $\trr{\mS\mE_1}{M_\mathbf{b}\otimes\id_{\mE_A}}= D\Tr_\mS\left[M_\mathbf{b}\right]\geq D\lambda_{\text{max}}\left[M_\mathbf{b}\right]$.

Substituting this back into the expression~\eqref{sup1} for $\mathcal{A}_{n,d}$ and noting that 
\begin{equation}
	\trr{{\mS\mE}}{\left(\rho\otimes\xi_{\mE}\right) \left(\bigotimes_{k=1}^{n} U_{x_ky_k}^{(k)\dagger}\right) \left(M_\mathbf{b}\otimes\id_{\mE_k}\right) \left( \bigotimes_{k=1}^{n}U_{x_ky_k}^{(k)}\right) }
	\leq \lambda_{\text{max}} \left[M_\mathbf{b}\right]
	\leq 1,
\end{equation}
we have the bound
\begin{align}
	\mathcal{A}_{n,d} &\leq \frac{1}{d} + \left(\frac{1}{d^n}-\frac{1}{d^{n+1}}\right) \sum_{\mathbf{b}\notin\text{SEP}}\lambda_{\text{max}}\left[M_\mathbf{b}\right] \notag\\
	&\le \frac{1}{d} + \left(\frac{1}{d^n}-\frac{1}{d^{n+1}}\right)\left(d^n-k\right) \notag\\
	 &= \frac{1}{d^n}\left(d^n-k+\frac{k}{d}\right),
\end{align}
as desired.

\section{Partial tightness of Eq.~\eqref{measurement} for two parties}\label{D}

In this section we consider the tightness of the inequality~\eqref{measurement} for the bipartite case, which gives a lower bound on the number of entangled measurement operators compatible with a given average score $\mathcal{A}_{2,d}$. Specifically, we present a simple strategy that saturates the bound using a measurement for which there are $k=md$ separable measurement operators, for $m=0,\ldots,d$.

Consider the following projective joint measurement of two $d$-dimensional quantum systems, in which the measurement operators $M_\mathbf{b}=\ketbra{M_\mathbf{b}}{M_\mathbf{b}}$ are separable for all $(b_1,b_2)$ satisfying $b_2-b_1\in\{0,\ldots,m-1\}$, and entangled otherwise (where, as always, $b_2-b_1$ is computed modulo $d$). Hence, there are $md$ separable operators and $(d-m)d$ entangled operators. Specifically,
\begin{equation}\label{measD1}
	\ket{M_{b_1b_2}} =
	\begin{cases}
		\ket{b_1,b_2} & \text{ for $b_2-b_1\in\{0,\ldots,m-1\}$,}\\
		 Z^{b_1}\otimes X^{b_2-b_1}\ket{\phi_{\text{max}}} & \text{ for $b_2-b_1\notin\{0,\ldots,m-1\}$}.
	\end{cases}
\end{equation}
To see that this is a valid measurement, note that all the inner products of two different separable basis-elements is zero. Similarly, the entangled basis-elements constitute a subset of the Bell-basis, and are thus orthonormal. To show that the inner products between the separable basis-elements and entangled basis-elements are all zero, consider the straightforward calculation
\begin{equation}
\bra{b_1',b_2'}Z^{b_1}\otimes X^{b_2-b_1}\ket{\phi_{\text{max}}}=\frac{1}{\sqrt{d}}\sum_{\ell=0}^{d-1}\omega^{\ell b_1}\braket{b_1',b_2'}{\ell,\ell+b_2-b_1}=\frac{\omega^{b_1'b_1}}{\sqrt{d}}\delta_{b_2'-b_1',b_2-b_1}.
\end{equation} 
Since all the separable basis-elements have $b_2'-b_1'\in\{0,\ldots,m-1\}$ while all entangled basis-elements have $b_2-b_1\notin\{0,\ldots,m-1\}$, the final delta function is zero. Hence, Eq.~\eqref{measD1} defines an orthonormal basis.

Consider thus the following strategy. Let $A_1$ and $A_2$ share a maximally entangled state, and apply (a relabelled variant of ) the transformation strategy exploited several times in the main text, namely take the unitary transformations $U^{A_1}_{x_1y_1}=Z^{x_1}X^{y_1+x_1}$ and $U^{A_2}_{x_2y_2}=Z^{x_2}X^{y_2-x_2}$. It follows straightforwardly that 
\begin{equation}
\mathcal{A}_{2,d}={\frac{d-m}{d}+\frac{m}{d^2}}\,,
\end{equation}
which saturates the upper bound of Eq.~\eqref{measurement} for any $m$.
We leave it as an open question whether a similar partial tightness result holds in the more general $n$ partite case.

\end{document}